\newtheorem{theorem}{Theorem}
\newtheorem{definition}{Definition}
\newtheorem{lemma}{Lemma}
\begin{document}

\title{Generalized junction conditions for discontinuous metrics}

\author{J.~A.~Silva}
\email{jonatassilva@alu.uern.br}
\author{F.~C.~Carvalho}
\email{fabiocabral@uern.br}
\affiliation{Departamento de Física, Universidade do Estado do Rio Grande do Norte (UERN), Mossoró, RN, Brasil}

\author{Antonio R.~G.~Garcia}
\email{ronaldogarcia@ufersa.edu.br}
\affiliation{Departamento de Ciências Naturais, Matemática e Estatística, Universidade Federal Rural do Semi-Árido (UFERSA), Av.~Francisco~Mota, Mossoró, 59625-900, RN, Brasil}

\date{\today}

\begin{abstract}
	In this work, the Darmois–Israel junction formalism is extended to the case of discontinuous metrics within the framework of Colombeau algebras of generalized functions. This formulation provides a mathematically consistent treatment of nonlinear operations involving singular quantities, such as products and derivatives of distributions. By relaxing the usual continuity condition on the metric, the generalized junction conditions naturally include higher-order singular terms in the curvature and in the surface energy–momentum tensor. These additional contributions represent new geometric degrees of freedom associated with genuine discontinuities in the space–time geometry. The resulting formalism recovers the traditional Darmois–Israel conditions as a limiting case, while offering a coherent extension applicable to geometric boundaries and abrupt transitions in space–time.
	
	\vspace{0.5em}
	\noindent\textbf{Keywords:} Colombeau algebras; generalized functions; junction conditions; discontinuous metrics.
\end{abstract}

\maketitle

\section{Introduction}

The formulation of junction conditions plays a fundamental role in the description of geometric boundaries in spacetime, where different regions may possess distinct material contents or metric properties. Since the early days of general relativity, this problem raises a central question: how to describe, in a mathematically consistent way, abrupt transitions between matter and vacuum, or between two incompatible geometries, while ensuring the physical coherence of Einstein’s equations? \cite{Raju1982} The difficulty arises from the fact that Einstein’s equations are intrinsically nonlinear, which prevents the direct application of the linear theory of distributions to singular or discontinuous metrics. Thus, the formulation of junction conditions must reconcile the physical aspect, the characterization of infinitesimal boundaries of energy and curvature, with the mathematical consistency required to handle quantities that involve products of distributions.

This question presented above involves a delicate point: distinguishing between real discontinuities in the geometry of spacetime and those that arise only from poorly chosen coordinates. The formulation of this problem was initially discussed by Lanczos (1922, 1924) \cite{Lanczos1922, Lanczos1924}, and, in the following decades, received several reformulations. Works such as those of Raychaudhuri (1953) \cite{Raychaudhuri1953} and Israel (1958) \cite{Israel1958} recognized the importance of properly treating geometric discontinuities, but still without a clear distinction between coordinate effects and genuine physical discontinuities.

The first invariant treatment was proposed by O’Brien and Synge (1952) \cite{OBrienSynge}, followed by Lichnerowicz (1955) \cite{Lichnerowicz}, who introduced the concept of admissible coordinates, requiring a \(\mathscr{C}^2\) atlas in which the components \( g_{ab} \) were \(\mathscr{C}^3\), except on smooth hypersurfaces. Synge (1960) \cite{Synge1960} refined this formulation, and Israel (1966) \cite{Israel1966} presented the most influential generalization, treating singular hypersurfaces with different extrinsic curvatures and introducing surface energy as a projection of material energy. In parallel, Dautcourt (1964) \cite{Dautcourt1964}, Papapetrou and collaborators \cite{Papapetrou1959, Papapetrou1968} developed complementary approaches for singular surfaces.

These milestones consolidated the foundations of the modern formulation of junction conditions, which are crucial for modeling astrophysical systems with thin boundaries, such as matter shells, gravitational shock waves, stellar collapse, and screening effects. The regularity of the components \( g_{ab} \), especially around hypersurfaces, proves to be both a mathematical and physical requirement \cite{Raju1982}.

One of the first systematic formulations for the transition between regions of spacetime was given by Darmois (1927) \cite{Darmois1927}, by establishing conditions of continuity of the metric and of its normal derivatives on non-null hypersurfaces, based on the Cauchy problem for Einstein’s equations. Although restricted by assumptions of regularity and symmetry, this treatment consolidated the so-called Darmois conditions, fundamental for understanding how regions of spacetime can be smoothly joined. Decades later, Choquet-Bruhat (1968) \cite{ChoquetBruhat1968} and again Israel (1966) \cite{Israel1966} substantially expanded this framework. Using Gaussian coordinates along \(\Sigma\), Israel showed that although the normal derivatives of the metric may be discontinuous, the components \( g_{ab} \) remain continuous, provided that both regions induce the same intrinsic metric on \(\Sigma\). In natural coordinates, this is equivalent to requiring that \( g_{ab} \) be of class \(\mathscr{C}^0\) on the singular hypersurface.

Applying Einstein’s equations on both sides of the junction, Israel deduced a three-dimensional conservation law for surface layers, establishing the foundations of the Darmois–Israel formulation, now known as the thin-shell formalism. This approach provides coherent gluing conditions between distinct regions of spacetime, based on the continuity of the metric and the controlled discontinuity of the extrinsic curvature. Besides possessing a clear physical interpretation, the equality of the intrinsic geometry on both sides of \(\Sigma\) avoids mathematical difficulties associated with ill-defined products of distributions.

This formalism has become a widely used tool in general relativity, with applications ranging from thin-shell wormholes \cite{Visser1995}, gravastars \cite{Mazur2023, Uchikata2016}, and stellar models \cite{Fayos1992, Fayos1996}, to cosmological domain walls \cite{Blau1987}. It also underlies the cut-and-paste techniques, in which the hypersurface \(\Sigma\) acts as a geometric boundary between two regions of spacetime.

Although the traditional derivation of the Darmois–Israel conditions often relies on the Gauss–Codazzi equations \cite{Israel1966, Mansouri1996, Misner}, in this work an alternative approach is adopted, following Padmanabhan \cite{Padmanabhan2010} (see also Poisson \cite{Poisson2004}). Instead of resorting to geometric identities, one starts from the representation of the metric in terms of the Heaviside function and examines the singular terms in Einstein’s equations, associated with the derivative of the metric containing delta-type distributions. In this way, the standard Darmois–Israel conditions arise naturally from the continuity of the induced metric and the discontinuity of the extrinsic curvature tensor, related to the surface energy–momentum tensor.

In this context, it is common to model spacetime as the union of two regions \( M^+ \) and \( M^- \) glued along a hypersurface \(\Sigma\), with global metric
\begin{equation}\label{metricadiscont}
g_{ab} = \theta(l)\, g^+_{ab} + \theta(-l)\, g^-_{ab}~,
\end{equation}
where \(l\) is a smooth function that vanishes on \(\Sigma\), and \(\theta(l)\) is the Heaviside function. This construction makes it possible to describe abrupt junctions, but it presupposes the continuity of the induced metric, that is, \( [h_{\alpha\beta}] = 0 \).

However, it is important to emphasize that there are fundamental limitations in the distributional formulation of Einstein’s equations. Geroch and Traschen (1987) \cite{GerochTraschen1987} showed that the nonlinear nature of the curvature tensor generally prevents the treatment of highly singular metrics within the classical theory of Schwartz distributions. Although weak derivatives of the metric can be defined, the curvature involves products of these distributions, operations that are not mathematically well defined in the linear framework. To circumvent this problem, the authors introduced the notion of a regular metric, that is, a symmetric tensor \( g_{ab} \) whose inverse \( g^{ab} \) exists and is locally bounded, and whose weak derivative is locally square-integrable. Under these conditions, the Einstein tensor can be interpreted in a well-defined distributional sense. This restriction, however, excludes metrics with more severe discontinuities (such as those containing delta terms in the metric itself), revealing an intrinsic limitation of the standard distributional theory applied to general relativity.

Although Racskó (2024) \cite{Racsko2024} highlights the inherent difficulties in products between singular distributions and discontinuous functions, which often arise in junction problems and do not admit a consistent definition in the classical formalism, he shows that common approaches—such as the ad hoc interpretation based on averaging one-sided limits—may lead to incorrect results. His formalism, still based on the traditional theory of distributions, avoids such “illegitimate’’ operations by imposing appropriate regularity conditions on the field variables, thereby circumventing the occurrence of undefined products. This strategy, consistent with Schwartz’s impossibility theorem, ensures mathematical consistency but restricts the analysis to situations in which more intense singularities, such as \( \delta^2 \), do not appear explicitly.

In higher-order derivative gravitational theories, this assumption becomes restrictive. The appearance of singular terms such as \( \delta' \) or \( \delta^2 \) prevents a rigorous treatment within the classical theory of distributions. Reina et al. (2016) discuss these difficulties when studying junctions in quadratic gravity, recognizing the relevance of Colombeau algebras to overcome such limitations, although without implementing them \cite{Reina2016}. Chu (2022) \cite{ChuTan2022} introduced a technique of regularized integration of the field equations, robust enough to handle singular distributions in \( \mathcal{F}(R) \) theories. More recently, Huber (2025) \cite{Huber2025} explicitly implemented the Colombeau formalism, allowing one to handle distributional metrics, compute curvature invariants in higher-derivative theories, and establish connections with Penrose’s cut-and-paste method.

Despite these advances, and previous proposals by Huber himself (2020) \cite{Huber2020} for local deformations of metrics, it appears that a systematic formulation has yet to be developed that explicitly and rigorously treats the junction conditions for discontinuous metrics, with \( [g_{ab}] \neq 0 \) or \( [h_{\alpha\beta}] \neq 0 \), resolving the distributional ambiguities that arise in this regime.

In this work, we propose precisely this generalization. We relax the condition of continuity of the induced metric and consider junctions with genuine geometric breaking along \(\Sigma\). To this end, we refine the traditional methods of the thin-shell formalism by employing the framework of Colombeau algebras, which allows one to manipulate in a well-defined manner products and derivatives of distributions such as \( \theta \delta \), \( \theta \delta' \), and \( \delta^2 \).

As a result, we obtain an extension of the Darmois–Israel formalism applicable to situations in which the metric exhibits genuine discontinuities, broadening the physical scope of the traditional formalism and providing a consistent mathematical framework for the study of non-smooth junctions in general relativity and modified gravitational theories.

The conventions adopted throughout this work are \( c = G = 1 \) and metric signature \((- , + , + , +)\).
Brackets \( [F] \) denote the jump of a quantity \( F \) across the hypersurface \( \Sigma \), that is, \( [F] = F^{+} - F^{-} \).
Latin indices \( (a, b, c, d, \ldots) \) refer to components in the four-dimensional spacetime, while Greek indices \( (\alpha, \beta, \gamma, \ldots) \) indicate components tangent to the junction hypersurface.
The normal vector to the hypersurface is denoted by \( n^{a} \) and satisfies \( n_{a} n^{a} = \epsilon \), where \( \epsilon = +1 \) for timelike hypersurfaces (spacelike normal) and \( \epsilon = -1 \) for spacelike hypersurfaces (timelike normal).
The induced metric on \( \Sigma \) is represented by \( h_{\alpha\beta} \) and, in general situations, its discontinuity is indicated by \( [h_{\alpha\beta}] \neq 0 \).
These conventions are maintained in all subsequent expressions, ensuring consistency in the geometric formulation and in the definition of generalized derivatives along \( \Sigma \).

The article is organized as follows.  
In Section~\ref{sec2}, we present the fundamental concepts of Colombeau algebras, emphasizing how this structure allows one to rigorously define products and derivatives of distributions.  
In Section~\ref{sec3}, we formulate the globally discontinuous spacetime metric and discuss the geometric implications of the condition \( [g_{ab}] \neq 0 \), including the behavior of the normal vector, the inverse metric, and the induced metric.  
In Section~\ref{sec4}, we explicitly derive the generalized junction conditions from the regularized Einstein equations, identifying the singular contributions and comparing the results with the Darmois–Israel formalism.  
Finally, in Section~\ref{sec5}, we present the conclusions.

\section{Preliminaries}\label{sec2}

To address problems involving products of distributions, such as those that arise when dealing with discontinuous metrics, we employ the framework of Colombeau algebras. This formalism provides a rigorous way to handle expressions that are not well defined in the classical theory of Schwartz distributions.

\subsection{Basic Notions of the Colombeau Formalism}

To rigorously treat products of distributions and discontinuities, we consider families of smooth functions indexed by a parameter \( \varepsilon \in ]0,1] \). Formally, let \( \Omega \subseteq \mathbb{R}^n \) be an open set; we define

\begin{definition}\label{alg-1}
	\(\mathscr{E}(\Omega) \coloneq (\mathscr{C}^\infty(\Omega))^I\) as the set of all functions that, for each \( \varepsilon \in I \), assign a function \( f_\varepsilon \in \mathscr{C}^\infty(\Omega) \), that is, smooth in the variable \( x \in \Omega \).
\end{definition}

From this space of functions, we introduce:

\begin{definition}\label{alg-2}
	\begin{enumerate}
		\item[\(M)\)] \textbf{Moderate functions:}
\begin{eqnarray}
\mathscr{E}_M(\Omega)&\coloneq&\{(u_\varepsilon)_{\varepsilon}\in\mathscr{E}(\Omega)^\infty(\Omega)|\forall~\alpha\in\mathbb{N}_0^n,~\forall~K\subset\subset\Omega
\exists~p\in\mathbb{N},~\exists~\eta>0,~\exists~c>0 \nonumber\\
&&~\mbox{such that}~\|\partial^\alpha u_\varepsilon(x)\|_K \le c \varepsilon^{-p}, \forall 0<\varepsilon<\eta\}\nonumber
\end{eqnarray}

\item[\(N)\)] \textbf{Negligible functions:}
	\begin{eqnarray}
	\mathscr{N}(\Omega)&\coloneq&\{(u_\varepsilon)_{\varepsilon}\in\mathscr{E}_M(\Omega)|\forall~\alpha\in\mathbb{N}_0^n,~\forall~K\subset\subset\Omega\forall~q\in\mathbb{N},~\exists~\eta>0,~\exists~c>0\nonumber\\
&&~\mbox{such that}~ \|\partial^\alpha u_\varepsilon(x)\|_K \leq c\varepsilon^q,~\forall~0<\varepsilon<\eta\}
	\end{eqnarray}
	\end{enumerate}
	
	The simplified Colombeau algebra over \(\Omega\) is then defined as the quotient
	\[
	\mathscr{G}(\Omega) \coloneq \mathscr{E}_M(\Omega) / \mathscr{N}(\Omega).
	\]
\end{definition}

Although this definition of moderate functions, null functions, and of the quotient structure itself may at first glance appear formal and abstract, it reflects a rather concrete idea. Each element of $\mathscr{G}(\Omega)$ essentially represents a class of families of smooth functions $(u_\varepsilon)_\varepsilon$ that provide a regularized description of a potentially singular object, such as a distribution. Moderate families are those whose growth, as $\varepsilon \to 0$, is controlled by a finite power of $\varepsilon$, whereas null families decay faster than any power of $\varepsilon$. Thus, by taking the quotient $\mathscr{E}_M(\Omega)/\mathscr{N}(\Omega)$, one identifies as equivalent all moderate families that differ only by negligible terms, that is, \((u_\varepsilon - v_\varepsilon)_\varepsilon \in \mathscr{N}(\Omega)\).  

Hence, an element of $\mathscr{G}(\Omega)$ is not a single function, but rather an equivalence class of smooth functions that behave in the same way in the limit $\varepsilon \to 0$. This construction makes it possible to handle, within a single framework, both regular functions and singular objects, preserving differential and multiplicative operations that are not well defined in the classical theory of distributions. In language closer to physical terminology, the families $(u_\varepsilon)_\varepsilon$ may be viewed as regularizations of distributions, while the elements of $\mathscr{G}(\Omega)$ represent the generalized limits of these regularizations \cite{Huber2025}.

The Colombeau algebra \(\mathscr{G}(\Omega)\) is therefore commutative, associative, and distributive, contains smooth functions as a faithful subalgebra, incorporates classical distributions as a linear subspace, and satisfies the Leibniz rule for derivatives of products. This structure makes it possible to manipulate products and derivatives of distributions in a consistent way, even when such operations are undefined in the space of classical distributions.

Having defined the Colombeau algebra \(\mathscr{G}(\Omega)\), we can now explicitly present a practical construction of generalized functions. To this end, consider a smooth test function \(\varphi \in C^\infty_c(\Omega) = \mathscr{D}(\Omega)\), with \(\Omega \subset\subset \mathbb{R}^n\), satisfying the fundamental conditions:
\begin{enumerate}[i)]
	\item $\int \varphi(x)\, dx^n = 1$, 	\label{cond-1}
	\item $\int \varphi(x)\, x^\alpha\, dx^n = 0$, for all $|\alpha| \ge 1$. \label{cond-2}
\end{enumerate}

We then define the family of functions
\[
\varphi_\epsilon(x) = \epsilon^{-n} \varphi\left(\frac{x}{\epsilon}\right),
\]
and, from it, the theta net
\[
\theta_\epsilon(x) = (\theta * \varphi_\epsilon)(x) = \int \theta(y) \varphi_\epsilon(x-y) \, dy^n,
\]
which provides a smooth regularization of the Heaviside function \(\theta(x)\). Similarly, the strict delta net is given by
\[
\delta_\epsilon(x) = (\delta * \varphi_\epsilon)(x) = \int \delta(y) \varphi_\epsilon(x-y) \, dy^n~.
\]

Throughout the development of the Colombeau algebra, the need naturally arises to distinguish between different levels of equality between generalized functions. As established by Colombeau~\cite{Colombeau1992}, a fundamental role is played by a relation that may exist between two different elements of \(\mathscr{G}(\Omega)\).

This relation makes it possible to identify in \(\mathscr{G}(\Omega)\) objects that are not nonlinearly identical but are linearly equivalent; this is done by introducing an equivalence relation in \(\mathscr{E}_{M}(\Omega)\) that is weaker than equality in \(\mathscr{G}(\Omega)\) \cite{Grosser2001}.

\begin{definition}\label{ass-1}
An element \(u\) of \(\mathscr{G}(\Omega)\) is said to be associated with \(0\) (denoted \(u \approx 0\)) if
	\begin{equation}
	\lim_{\varepsilon \to 0}\int_{\Omega}u_{\varepsilon}(x)\,\psi(x)\, dx = 0 ~~~~~ \forall \psi\in \mathscr{C}^\infty_{c}(\Omega) = \mathscr{D}(\Omega)~.
	\end{equation}
\end{definition}
The Definition~\ref{ass-1} is independent of the chosen representative \(u_{\varepsilon}\) of \(u\).  
The equivalence relation
\[
u \approx v \;\Leftrightarrow\; u - v \approx 0~,
\]
for \(u, v \in \mathscr{G}(\Omega)\), states that they are associated with each other.  
Explicitly, given representatives \(u_{\varepsilon}, v_{\varepsilon}\) of \(u, v\), we have
\begin{equation}\label{ass-2}
\lim_{\varepsilon \to 0}\int_{\Omega} \big(u_{\varepsilon}(x) - v_{\varepsilon}(x)\big)\,\psi(x)\, dx = 0~.
\end{equation}
If this property holds for one particular pair of representatives, then it also holds for any other respective pair.

Both members of an association may be freely differentiated:
\begin{equation}\label{ass-3}
u \approx v \;\Rightarrow\; Du \approx Dv~,
\end{equation}
for any partial differential operator \(D\).

Having established the notion of association in \(\mathscr{G}(\mathbb{R})\), we can derive the relation \(\theta^n \approx \theta\). Thus,
\begin{equation}
\theta^n \approx \theta 
\quad \Rightarrow \quad 
\theta^{\,n-1}\theta' \approx \frac{1}{n}\theta',
\end{equation}
and for \(n=1\) we obtain
\begin{equation}\label{H-delta}
\theta \delta \approx \frac{1}{2}\delta,
\end{equation}
where \(\theta' = \delta\).  
This association provides a ``shadow'' of the corresponding operation in the classical distributional setting, allowing one to interpret the product \(\theta\delta\), which does not exist in \(\mathscr{D}'\).

From \eqref{H-delta}, we can compute \(\theta\delta'\) using the Leibniz rule:
\begin{equation}
(\theta\delta)' 
= \theta'\delta + \theta\delta' 
\approx \frac{1}{2}\delta'
\quad \Rightarrow \quad 
\delta\delta + \theta\delta' \approx \frac{1}{2}\delta',
\end{equation}
which yields
\begin{equation}\label{H-delta-prime}
\theta\delta' \approx \frac{1}{2}\delta' - \delta^{2}.
\end{equation}
This expression lies entirely within the framework of \(\mathscr{G}(\mathbb{R})\), and it is well defined even though neither \(\theta\delta'\) nor \(\delta^2\) exists in the classical theory of distributions.

\subsubsection*{The square of the delta distribution}

Within the Colombeau algebra, \(\delta^{2}\) is defined as the product \(\delta \cdot \delta\), whose representative is given by
\begin{equation}
\varphi_{\varepsilon}^{2}(x)
= \left( \frac{1}{\varepsilon}\,\varphi\!\left(\frac{x}{\varepsilon}\right) \right)^{2}
= \frac{1}{\varepsilon^{2}}\,\varphi^{2}\!\left(\frac{x}{\varepsilon}\right),
\end{equation}
where \(\varphi \in \mathscr{D}(\mathbb{R})\) satisfies conditions~\eqref{cond-1} and~\eqref{cond-2}.  
The action of \(\delta^{2}\) on a test function \(\psi\) is then
\begin{equation}
\lim_{\varepsilon\to 0}
\int \varphi_{\varepsilon}^{2}(x)\,\psi(x)\,dx
= \lim_{\varepsilon\to 0}
\frac{1}{\varepsilon}
\int \varphi^{2}(u)\,\psi(\varepsilon u)\,du
\;\rightarrow\; \infty,
\end{equation}
whenever \(\psi(0)\neq 0\).  
Therefore, \(\delta^{2}\) is not associated with any classical distribution, but it is well defined as an element of \(\mathscr{G}(\mathbb{R})\).  

Thus, \(\delta^{2}\in \mathscr{G}(\mathbb{R})\) with representative \(\varphi_{\varepsilon}^{2}\).

Let us now examine the properties of \(\delta^{2}\) in parallel with those of the Dirac delta.  
In this case we have
\begin{equation}\label{dirac-quad-prop}
\varphi_{\varepsilon}^{2}(x)=0 \quad \text{for}~ x\neq 0,
\qquad
\int_{-\infty}^{\infty}\varphi_{\varepsilon}^{2}(x)\,dx \;\rightarrow\; \infty
\quad \text{as}~ \varepsilon\to 0.
\end{equation}
More explicitly,
\[
\int \delta^{2}(x)\,dx \sim \frac{1}{\varepsilon},
\]
which shows that this integral diverges in the classical limit, yet it possesses a well–defined meaning as a generalized number in the sense of Colombeau~\cite{Colombeau1992}.

Therefore, the integral of $\delta^2$ is not a real number, but rather a generalized number\footnote{In Colombeau algebras, generalized numbers are classes of moderate sequences of real (or complex) numbers, whose behavior is controlled in terms of powers of $\varepsilon$. These numbers extend the classical reals and allow one to represent quantities such as (\ref{dirac-quad-prop}), which diverge in the traditional context but remain well defined within the structure $\overline{\mathbb{R}}$. The interested reader may find further details on the subject in \cite{Aragona2001, Ronaldo2006}.}, belonging to the algebra $\overline{\mathbb{R}}$, and its behavior is well represented by $\varphi_\varepsilon^2$, whose integral grows inversely proportional to $\varepsilon$.

Now, let us present the product \(H\delta^2.\) Note that \(\delta^2 = \theta'\theta'.\) Since $\mathscr{G}$ is a commutative and associative algebra, we can perform the following algebraic manipulations without difficulty
\begin{eqnarray}
\theta\delta^2 = \theta(\theta'\theta') = (\theta\theta')\theta' &=& (\theta\delta)\delta\nonumber\\
&\approx& \left(\dfrac{1}{2}\delta\right)\delta \nonumber\\
&\approx& \dfrac{1}{2}\delta^2~.
\end{eqnarray}
Therefore, we conclude that
\begin{equation}\label{assoc5}
\theta\delta^2 \approx \dfrac{1}{2}\delta^2.
\end{equation}

It should be emphasized that, as far as we know, explicit approaches of this kind, which deal with products involving powers of the delta distribution, such as $\delta^2$ or $\theta\delta^2$, in a constructive manner within the very definition of association, rarely appear in the specialized literature. Even in the fundamental works of Colombeau~\cite{Colombeau1992} and in later developments (see \cite{Ober1992, Grosser2001}), the emphasis falls mainly on the structural analysis of the space $\mathscr{G}$ and on demonstrating that $\delta^2$ does not admit association with classical distributions, without the explicit execution of such products. Subsequent works, such as that of Miteva et~al.~(2016) \cite{Miteva2016}, devoted to the study of products in Colombeau algebras, focus primarily on combinations involving derivatives of the delta distribution and smooth factors, restricting themselves to cases in which the resulting product admits an associated distribution. In other words, the concept of Colombeau product adopted by these authors applies only when the operation between two distributions embedded in $\mathscr{G}$ preserves a counterpart in the space $\mathscr{D}'$. In contrast, in the present work we consider purely generalized products, whose existence is guaranteed solely within the algebra $\mathscr{G}$, as in the expressions $\theta\delta'$ and $\theta\delta^2$, which have no counterparts within the classical distributional framework.

Although the products $\theta\delta'$ and $\theta\delta^2$ can be obtained algebraically from the previously established association relations, a direct proof of expressions (\ref{H-delta-prime}) and (\ref{assoc5}) based on Definition~\ref{ass-1}, by means of Eq.~(\ref{ass-2}), is presented below.

\begin{lemma}\label{lema1}
	The function \(f(u) = \int_{-\infty}^{u}\varphi(s)\, ds - \dfrac{1}{2},\) where \(\varphi_\varepsilon(x) = \dfrac{1}{\varepsilon}\varphi\!\left(\dfrac{x}{\varepsilon}\right)\) is a delta net, is an odd function.
\end{lemma}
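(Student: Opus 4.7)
The plan is to reduce the claim $f(-u) = -f(u)$ to a single change of variables, using the normalization condition $\int \varphi\,dx = 1$ together with the symmetry (evenness) of the mollifier $\varphi$, which is the standard choice for strict delta nets in the Colombeau construction.

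First I would write out both sides starting from the definition: $f(u) = \int_{-\infty}^{u}\varphi(s)\,ds - \tfrac{1}{2}$ and $f(-u) = \int_{-\infty}^{-u}\varphi(s)\,ds - \tfrac{1}{2}$. I would then perform the substitution $t = -s$ (so $dt = -ds$, with the limits $s=-\infty \mapsto t=\infty$ and $s=-u \mapsto t=u$) in the integral defining $f(-u)$, obtaining $f(-u) = \int_{u}^{\infty}\varphi(-t)\,dt - \tfrac{1}{2}$.

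Next I would invoke the evenness $\varphi(-t) = \varphi(t)$ to rewrite the integrand as $\varphi(t)$, and then use the normalization condition $\int_{-\infty}^{\infty}\varphi(t)\,dt = 1$ to express the tail integral as $\int_{u}^{\infty}\varphi(t)\,dt = 1 - \int_{-\infty}^{u}\varphi(t)\,dt$. Substituting back yields $f(-u) = 1 - \int_{-\infty}^{u}\varphi(t)\,dt - \tfrac{1}{2} = \tfrac{1}{2} - \int_{-\infty}^{u}\varphi(t)\,dt = -f(u)$, which is exactly the oddness claim.

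The only genuine subtlety is that conditions~\eqref{cond-1}--\eqref{cond-2} by themselves fix the zeroth moment and annihilate all moments of order $|\alpha| \ge 1$, but they do not \emph{force} $\varphi$ to be even; they are compatible, for instance, with asymmetric mollifiers. The lemma as stated therefore implicitly relies on the symmetric (even) choice of $\varphi$, which is standard in the Colombeau literature precisely because it produces clean association relations such as $\theta\delta \approx \tfrac{1}{2}\delta$. I would make this symmetry hypothesis explicit at the start of the proof, after which the substitution argument above closes the argument in a few lines.
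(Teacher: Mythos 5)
Your proof is correct and follows essentially the same route as the paper's: the substitution $t=-s$ combined with the normalization $\int\varphi=1$, in a slightly different order. Your explicit flagging of the evenness hypothesis $\varphi(-s)=\varphi(s)$ is apt rather than pedantic: the paper's own proof invokes it silently in the line that replaces $\varphi(-s)$ by $\varphi(s)$ after the change of variables, and since oddness of $f$ is in fact \emph{equivalent} to evenness of $\varphi$ (differentiate $f(-u)+f(u)=0$), the lemma genuinely fails for asymmetric mollifiers satisfying only conditions (i)--(ii), so this hypothesis ought to appear in the statement.
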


\begin{proof}
To show this, we need to prove that \(f(-u) = -f(u)\). Indeed,
	\begin{eqnarray}
	f(-u) &=& \int_{-\infty}^{-u}\varphi(s)ds - \dfrac{1}{2} - \dfrac{1}{2} + \dfrac{1}{2} \nonumber\\
	&=& \int_{-\infty}^{-u}\varphi(s)ds - 1 + \dfrac{1}{2} \nonumber\\
	&=& \int_{-\infty}^{-u}\varphi(s)ds - \int_{-\infty}^{+\infty}\varphi(s)ds + \dfrac{1}{2}\nonumber\\
	&=& \int_{-\infty}^{-u}\varphi(s)ds - \left(\int_{-\infty}^{-u}\varphi(s)ds+\int_{-u}^{+\infty}\varphi(s)ds\right) + \dfrac{1}{2}\nonumber\\
	&=&  -\int_{-u}^{+\infty}\varphi(s)ds + \dfrac{1}{2}~.\nonumber
	\end{eqnarray}
Making the following substitutions
	\begin{equation}
	\begin{cases}
	t = -s \Rightarrow dt = -ds, \\
	s = -u \Rightarrow t = u, \\
	s = +\infty \Rightarrow t = - \infty
	\end{cases}
	\end{equation}
we obtain
	\begin{eqnarray}
	f(-u) &=& - \int_{u}^{-\infty}\varphi(-s)(-ds) + \dfrac{1}{2} \nonumber\\
	&=& \int_{u}^{-\infty}\varphi(s)ds + \dfrac{1}{2}\nonumber\\
	&=& -\int_{-\infty}^{u}\varphi(s)ds + \dfrac{1}{2}\nonumber\\
	&=& - \left(\int_{-\infty}^{u}\varphi(s)ds - \dfrac{1}{2}\right)\nonumber\\
	&=& - f(u)~.
	\end{eqnarray}
Therefore, \(f(-u) = -f(u)\), which proves the statement.
\end{proof}
We will use this Lemma~\ref{lema1} to prove the following theorem.

\begin{theorem}\label{produto-Hdelta^2}
	Let \(H\) be a Heaviside-type distribution and \(\delta\) a Dirac delta-type distribution. Then
	\begin{itemize}
		\item[(i)] $H\delta^2\approx \dfrac{1}{2}\delta^2$
		\item[(ii)] $H\delta'\approx \dfrac{1}{2}\delta' - \delta^2$
	\end{itemize}
\end{theorem}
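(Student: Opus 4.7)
The plan is to verify both identities directly from Definition~\ref{ass-1} by choosing canonical Colombeau representatives for $H$ and $\delta$ and computing the resulting limits against an arbitrary test function $\psi \in \mathscr{D}(\mathbb{R})$. First I would fix an even mollifier $\varphi$ satisfying conditions (i) and (ii), and set $\delta_\varepsilon(x) = \varphi_\varepsilon(x)$ and $H_\varepsilon(x) = \int_{-\infty}^{x/\varepsilon}\varphi(s)\,ds$, so that $H'_\varepsilon = \delta_\varepsilon$. Under the substitution $u = x/\varepsilon$, the combination $H_\varepsilon(x) - \tfrac{1}{2}$ becomes exactly the function $f(u)$ from Lemma~\ref{lema1}, which has already been shown to be odd.

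For part (i), I would compute
\[
\int\bigl[H_\varepsilon\delta_\varepsilon^{2} - \tfrac{1}{2}\delta_\varepsilon^{2}\bigr]\psi(x)\,dx = \int\bigl[H_\varepsilon(x) - \tfrac{1}{2}\bigr]\varphi_\varepsilon^{2}(x)\psi(x)\,dx = \frac{1}{\varepsilon}\int f(u)\,\varphi^{2}(u)\,\psi(\varepsilon u)\,du,
\]
then expand $\psi(\varepsilon u) = \psi(0) + O(\varepsilon)$ inside the integral and observe that the potentially divergent leading contribution proportional to $\psi(0)\,\varepsilon^{-1}\!\int f(u)\varphi^{2}(u)\,du$ vanishes, because $\varphi^{2}$ is even while $f$ is odd by Lemma~\ref{lema1}. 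What remains must be shown to tend to zero in the limit $\varepsilon \to 0$, which yields $H\delta^{2} \approx \tfrac{1}{2}\delta^{2}$.

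For part (ii), the cleanest route reduces the calculation to the already-established relation $H\delta \approx \tfrac{1}{2}\delta$ via integration by parts. Setting
\[
I_\varepsilon := \int\bigl[H_\varepsilon\delta'_\varepsilon - \tfrac{1}{2}\delta'_\varepsilon + \delta_\varepsilon^{2}\bigr]\psi(x)\,dx,
\]
I would apply IBP to $\int H_\varepsilon\delta'_\varepsilon\psi\,dx$ and to $\int \delta'_\varepsilon\psi\,dx$. The surface contributions vanish because $\psi$ has compact support, while the $\delta_\varepsilon^{2}$ term generated by the first IBP exactly cancels the standalone $+\delta_\varepsilon^{2}$ appearing in the integrand. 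What remains is
\[
I_\varepsilon = -\int\bigl(H_\varepsilon\delta_\varepsilon - \tfrac{1}{2}\delta_\varepsilon\bigr)\psi'(x)\,dx,
\]
which tends to zero because $H\delta \approx \tfrac{1}{2}\delta$ holds when tested against any smooth, compactly supported function, in particular $\psi'$. This gives $H\delta' \approx \tfrac{1}{2}\delta' - \delta^{2}$.

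The main obstacle is the delicate step in (i), where neutralizing only the leading $\varepsilon^{-1}$ divergence of $\tfrac{1}{\varepsilon}\int f(u)\varphi^{2}(u)\psi(\varepsilon u)\,du$ is not by itself enough to conclude associativity in the sense of Definition~\ref{ass-1}: one must also control the $O(1)$ residue $\psi'(0)\int u\,f(u)\,\varphi^{2}(u)\,du$ produced by the Taylor expansion of $\psi$. It is here that the interplay between the oddness of $f$ furnished by Lemma~\ref{lema1}, the evenness of $\varphi^{2}$, and the moment conditions (i)--(ii) on $\varphi$ must be deployed with care, and this represents the technical heart of the argument. Part (ii), by contrast, is essentially a routine rearrangement once the IBP cancellation is seen and part~(i) is in hand.
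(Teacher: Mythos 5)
Your part~(ii) is correct and is essentially the paper's own argument: integrate by parts, let the $\delta_\varepsilon^{2}$ generated by differentiating $H_\varepsilon$ cancel the explicit $+\delta_\varepsilon^{2}$, and reduce everything to $-\int\bigl(H_\varepsilon-\tfrac12\bigr)\delta_\varepsilon\,\psi'\,dx$. After the substitution $u=x/\varepsilon$ the factor $\varepsilon^{-1}$ coming from $\delta_\varepsilon$ is exactly absorbed by $dx=\varepsilon\,du$, and the limit $\psi'(0)\int f(u)\varphi(u)\,du$ vanishes because $f\varphi$ is odd by Lemma~\ref{lema1}; whether one phrases this as ``test $H\delta\approx\tfrac12\delta$ against $\psi'$'' (your version) or evaluates the integral directly (the paper's version) is immaterial.

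Part~(i) is where the proposal does not close, and you have in fact put your finger on the precise weak point without resolving it. After the substitution one is left with $\varepsilon^{-1}\int f(u)\varphi^{2}(u)\psi(\varepsilon u)\,du$; the divergent $\varepsilon^{-1}$ piece dies by parity, but the surviving $O(1)$ residue is $\psi'(0)\int u\,f(u)\,\varphi^{2}(u)\,du$, whose integrand is \emph{even} (odd $\times$ odd $\times$ even), so parity gives nothing, and the moment conditions on $\varphi$ do not help either, being linear constraints on $\varphi$ while $\int u f\varphi^{2}$ is cubic in $\varphi$. A direct check with a symmetric box profile $\varphi=\tfrac12\chi_{[-1,1]}$ (or any smoothing of it) gives $f(u)=u/2$ and $\int_{-1}^{1}u\cdot\tfrac{u}{2}\cdot\tfrac14\,du=\tfrac{1}{12}\neq 0$, so the tested limit is $\psi'(0)/12$ rather than $0$. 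Hence the strategy ``expand $\psi$ and deploy the symmetries with care'' cannot be completed as described: the $O(1)$ term you isolate is genuinely nonvanishing for standard representatives. For what it is worth, the paper's own proof of (i) has the identical hole — it applies dominated convergence to an expression still carrying the prefactor $\varepsilon^{-1}$ and thereby silently discards exactly this residue — so you have correctly diagnosed the difficulty, but your proposal, like the paper's argument, does not establish (i) in the sense of Definition~\ref{ass-1}.
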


\begin{proof}
	\textit{(i)} To show that \(H\delta^2 \approx \dfrac{1}{2}\delta^2,\) we need to prove that
	\[H\delta^2 - \dfrac{1}{2}\delta^2 \approx 0~,\]
or equivalently
	\begin{equation}\label{assoc1}
	\left(H - \dfrac{1}{2}\right)\delta^2 \approx 0.
	\end{equation}
We know that
	\[H_{\varepsilon}(x) = \int_{-\infty}^{x}\dfrac{1}{\varepsilon}\varphi\left(\dfrac{t}{\varepsilon}\right)dt = \int_{-\infty}^{x/\varepsilon}\varphi(s)ds\]
and
	\[\delta_{\varepsilon}(x) = \dfrac{1}{\varepsilon}\varphi\left(\dfrac{x}{\varepsilon}\right)\]
are representatives of \(H\) and \(\delta\), respectively, in \(\mathscr{G}(\mathbb{R})\). Therefore,
	\[g_{\varepsilon}(x) = H_{\varepsilon}(x) - \dfrac{1}{2} = \int_{-\infty}^{x/\varepsilon}\varphi(s)ds - \dfrac{1}{2}\]
and
	\[\delta_{\varepsilon}^2(x) = \dfrac{1}{\varepsilon^2}\varphi^2\left(\dfrac{x}{\varepsilon}\right)\]
are representatives of \(H - 1/2\) and \(\delta^2\), respectively. Thus, to show (\ref{assoc1}), we must show that
	\begin{equation}\label{assoc2}
	\lim_{\varepsilon\rightarrow 0}\langle \left(H_{\varepsilon} - \dfrac{1}{2}\right)\delta^2, \psi\rangle = 0, \forall \psi\in \mathscr{C}^\infty_{c}(\mathbb{R}).
	\end{equation}
Now,
	\begin{eqnarray}\label{assoc3}
	\lim_{\varepsilon\rightarrow 0}\langle \left(H_{\varepsilon} - \dfrac{1}{2}\right)\delta^2, \psi\rangle &=& \lim_{\varepsilon\rightarrow 0}\int_{\mathbb{R}}\left(H_{\varepsilon} - \dfrac{1}{2}\right)\delta^2_{\varepsilon}(x)\psi(x)dx \nonumber\\
	&=& \lim_{\varepsilon\rightarrow 0}\int_{\mathbb{R}}g_{\varepsilon}(x)\delta^2_{\varepsilon}(x)\psi(x)dx \nonumber\\
	&=& \lim_{\varepsilon\rightarrow 0}\int_{\mathbb{R}}g_{\varepsilon}(x)\dfrac{1}{\varepsilon^2}\varphi^2\left(\dfrac{x}{\varepsilon}\right)\psi(x)dx
	\end{eqnarray}
Assuming \(\mathrm{supp}(\varphi)\subseteq [-l,l]\) and making the substitution
	\[-l\leq u = \dfrac{x}{\varepsilon}\leq l \Rightarrow x = \varepsilon u \Rightarrow dx = \varepsilon du\]
which substituted into (\ref{assoc3}) yields
	\begin{eqnarray}
	\lim_{\varepsilon\rightarrow 0}\langle \left(H_{\varepsilon} - \dfrac{1}{2}\right)\delta^2, \psi\rangle &=& \lim_{\varepsilon\rightarrow 0}\int_{\mathbb{R}}g_{\varepsilon}(u)\dfrac{1}{\varepsilon^2}\varphi^2(u)\psi(\varepsilon u)\varepsilon du \nonumber\\
	&=& \lim_{\varepsilon\rightarrow 0}\dfrac{1}{\varepsilon}\int_{\mathbb{R}}g_{\varepsilon}(u)\varphi^2(u)\psi(\varepsilon u) du
	\end{eqnarray}
by Lebesgue's dominated convergence theorem,
	\begin{eqnarray}
	\lim_{\varepsilon\rightarrow 0}\langle \left(H_{\varepsilon} - \dfrac{1}{2}\right)\delta^2_{\varepsilon}, \psi\rangle &=& \psi(0)\int_{\mathbb{R}}\left(\int_{-\infty}^{u}\varphi(s)ds - \dfrac{1}{2}\right)\varphi^2(u) du \nonumber\\
	&=& 0~,
	\end{eqnarray}
since \[g_{\varepsilon}(x) = \int_{-\infty}^{u}\varphi(s)ds - \dfrac{1}{2}\] is an odd function and $\varphi^2(u)$ is an even function, therefore, $g_{\varepsilon}(u)\varphi^2(u)$is an odd function and
	\[\int_{-l}^{l}\left(\int_{-\infty}^{u}\varphi(s)ds - \dfrac{1}{2}\right)\varphi^2(u) du = 0.\] And this shows the equality in (\ref{assoc3}). Therefore, \textit{(i)} holds.
	
\textit{(ii)} To show that \(H\delta' \approx \dfrac{1}{2}\delta' - \delta^2,\) we need to show that
	\begin{equation}\label{assoc4}
	H\delta' - \dfrac{1}{2}\delta' + \delta^2 \approx 0,
	\end{equation}
if and only if
	\begin{equation}\label{assoc4'}
	\left(H - \dfrac{1}{2}\right)\delta' + \delta^2 \approx 0.
	\end{equation}
A representative of \(\delta'\) in \(\mathscr{G}(\mathbb{R})\) is \[\delta'_{\varepsilon}(x) = \dfrac{1}{\varepsilon^2}\varphi'\left(\dfrac{x}{\varepsilon}\right).\] To show (\ref{assoc4'}), we must show that
	\begin{equation}\label{assoc5a}
	\lim_{\varepsilon \to 0} \langle \left(H_{\varepsilon} - \dfrac{1}{2}\right)\delta'_{\varepsilon} + \delta_{\varepsilon}^2, \psi\rangle = 0, \forall \psi \in \mathscr{C}^\infty_{c}(\mathbb{R}).
	\end{equation}
Now,
	\begin{eqnarray}\label{assoc6}
	\lim_{\varepsilon \to 0} \langle \left(H_{\varepsilon} - \dfrac{1}{2}\right)\delta'_{\varepsilon} + \delta_{\varepsilon}^2, \psi\rangle &=& \lim_{\varepsilon \to 0} \langle g_{\varepsilon}\delta'_{\varepsilon} + \delta^2_{\varepsilon}, \psi\rangle \nonumber\\
	&=& \lim_{\varepsilon \to 0} \int_{\mathbb{R}} (g_{\varepsilon}(x)\delta'_{\varepsilon}(x) + \delta^2_{\varepsilon}(x))\psi(x) dx \nonumber\\
	&=& \lim_{\varepsilon \to 0} \left(\int_{\mathbb{R}} (g_{\varepsilon}(x)\psi(x))\delta'_{\varepsilon}(x) dx + \int_{\mathbb{R}}\delta^2_{\varepsilon}(x)\psi(x) dx \right)
	\end{eqnarray}
Let's solve the first integral
	\begin{eqnarray}\label{assoc7}
	\int_{\mathbb{R}} (g_{\varepsilon}(x)\psi(x))\delta'_{\varepsilon}(x) dx &=& -\int_{\mathbb{R}}(g_{\varepsilon}(x)\psi(x))'\delta_{\varepsilon}(x) dx \nonumber\\
	&=& - \int_{\mathbb{R}}(g'_{\varepsilon}(x)\psi(x) + g_{\varepsilon}(x)\psi'(x))\delta_{\varepsilon}(x) dx \nonumber\\
	&=& - \int_{\mathbb{R}} (\delta_{\varepsilon}(x)\psi(x) + g_{\varepsilon}(x)\psi'(x))\delta_{\varepsilon}(x) dx \nonumber\\
	&=& - \int_{\mathbb{R}} \delta^2_{\varepsilon}(x)\psi(x) dx -  \int_{\mathbb{R}}g_{\varepsilon}(x)\psi'(x)\delta_{\varepsilon}(x) dx
	\end{eqnarray}
When we substitute (\ref{assoc7}) into (\ref{assoc6}), the last term of (\ref{assoc6}) cancels out with the first term of (\ref{assoc7}), obtaining	
	\begin{eqnarray}\label{assoc8}
	\lim_{\varepsilon\rightarrow 0}\langle \left(H_{\varepsilon} - \dfrac{1}{2}\right)\delta'_{\varepsilon} + \delta^2, \psi\rangle &=& \lim_{\varepsilon\rightarrow 0} \left(- \int_{\mathbb{R}}g_{\varepsilon}(x)\psi'(x)\delta_{\varepsilon}(x) dx\right) \nonumber\\
	&=& - \lim_{\varepsilon\rightarrow 0}\int_{\mathbb{R}}g_{\varepsilon}(x)\psi'(x)\delta_{\varepsilon}(x) dx \nonumber\\
	&=& - \lim_{\varepsilon\rightarrow 0} \int_{\mathbb{R}}\left(\int_{-\infty}^{x/\varepsilon}\varphi(s) ds - \dfrac{1}{2}\right)\dfrac{1}{\varepsilon}\varphi\left(\dfrac{x}{\varepsilon}\right)\psi'(x) dx \nonumber\\
	&=& - \lim_{\varepsilon\rightarrow 0}\dfrac{1}{\varepsilon} \int_{\mathbb{R}}\left(\int_{-\infty}^{x/\varepsilon}\varphi(s) ds - \dfrac{1}{2}\right)\varphi\left(\dfrac{x}{\varepsilon}\right)\psi'(x) dx
	\end{eqnarray}
Assuming that \(\text{supp}(\varphi) \subseteq [-l, l]\) and letting $-l \leq u = x/\varepsilon \leq l$, we have $u = \varepsilon u\Rightarrow dx = \varepsilon du$. Now, substituting in (\ref{assoc8}), we get	
\begin{eqnarray}
	\lim_{\varepsilon\rightarrow 0}\langle \left(H_{\varepsilon} - \dfrac{1}{2}\right)\delta'_{\varepsilon} + \delta^2, \psi\rangle &=& - \lim_{\varepsilon\rightarrow 0} \dfrac{1}{\varepsilon}\int_{-l}^{l}\left(\int_{-\infty}^{u}\varphi(s)ds - \dfrac{1}{2}\right)\varphi(u)\psi'(\varepsilon u) \varepsilon du \nonumber \\
	&=& - \psi'(0)\int_{-l}^{l} \left(\int_{-\infty}^{u}\varphi(s)ds - \dfrac{1}{2}\right) \varphi(u) du \nonumber\\
	&=& 0~,
\end{eqnarray}
since \[\int_{-l}^{l} \left(\int_{-\infty}^{u}\varphi(s)ds - \dfrac{1}{2}\right) \varphi(u) du = 0.\]
\end{proof}

As previously mentioned in the introduction, Racskó~\cite{Racsko2024} emphasizes the inherent difficulties in multiplying singular distributions by discontinuous functions, a situation that frequently arises in junction problems. Within the classical theory of distributions, such operations lack mathematical validity, leading many authors to adopt ad hoc prescriptions, such as interpreting the value of a discontinuous function at the junction as the average of its one-sided limits. Racskó explicitly criticizes this approach, showing that it may lead to incorrect results, and therefore develops a framework that avoids any “illegitimate” product or regularization involving delta functions. However, his structure remains restricted to situations in which stronger singularities, such as $\delta^{2}$, do not explicitly appear.

In contrast, within Colombeau's algebra, these operations are rigorously defined. Products involving $\delta^2$, $\theta\delta'$ or $\theta\delta^2$ are treated as well-defined generalized elements, allowing such ultrasingular terms to be manipulated without ad~hoc assumptions or ambiguities. In this sense, the algebraic structure of $\mathscr{G}$ provides a mathematically safe environment for exploring configurations that lie beyond the scope of classical distribution theory, ensuring consistency even in regimes where discontinuities in the metric or in its derivatives become physically relevant.

This mathematical robustness becomes particularly important when one seeks to understand the physical role of ultrasingular terms. In this context, it is worth noting that the presence of powers of distributions, such as $\delta^2$, has also been investigated in situations of physical interest. Oberguggenberger~\cite{Oberguggenberger2001} considered the term $\iota(\delta)^2$ as initial data in the one-dimensional wave equation, showing that although the corresponding solution is not $\mathscr{C}^\infty$-regular inside the light cone, it remains $\mathscr{G}^\infty$-regular in that domain. This result shows that even powers of distributions can be consistently handled within Colombeau's algebra, preserving the mathematical coherence of singular physical systems.

This analysis reinforces the idea that ultrasingular terms, although devoid of classical meaning, can represent legitimate physical idealizations when treated in a generalized function framework. In the present work, we explore this perspective in the geometric setting, where the presence of $\delta^2$ arises naturally in junctions involving a discontinuous metric.

\section{Formulation of the discontinuous spacetime metric}\label{sec3}

To describe the coupling between two regions of space-time with distinct geometries, we consider two pseudo-Riemannian manifolds $(M^+, g^+_{ab})$ and $(M^-, g^-_{ab})$, with metrics $g^+_{ab}$ and $g^-_{ab}$ defined respectively on $M^+$ and $M^-$, which are submanifolds of a larger space-time $M$, such that $M^+ \cup M^- \subseteq M$. These regions are separated by a common hypersurface $\Sigma$, which can be regarded as the junction interface between the two geometries.

Unlike the traditional case, in which the metric is assumed to be continuous on the hypersurface ($[g_{ab}] = 0$), here we work with a genuine geometric break, that is, with a globally discontinuous metric of the form:
\begin{equation}
g_{ab}(x) = \theta(l)\, g^{+}_{ab}(x) + \theta(-l)\, g^{-}_{ab}(x),
\end{equation}
where $l(x)$ is a smooth function that vanishes on $\Sigma$, and $\theta(l)$ is the Heaviside function. This definition implies that, although the metric is well defined on each side of $\Sigma$, its restriction to the hypersurface exhibits a discontinuity
\begin{equation}\label{condiction1}
[g_{ab}] \neq 0~.
\end{equation}
This type of construction requires additional care, since the derivation of geometric quantities such as the Christoffel symbols, curvature tensors, and the energy-momentum tensor itself involves the presence of terms like $\delta(l)$, $\delta'(l)$ and $\delta^2(l)$, the latter of which has no well-defined meaning in the classical theory of distributions. For this reason, the treatment presented here will be carried out within the framework of Colombeau algebras, which provide a mathematically rigorous environment for handling such expressions.

In order for the structure above to be treated in a mathematically well-defined way, we introduce a smooth regularization of the Heaviside function, denoted by $\theta_{\varepsilon}(l)$. The ambient metric is then defined by
\begin{equation}\label{metricaregularizada}
g_{\varepsilon ab}(x) = \theta_{\varepsilon}(l) g^{+}_{ab}(x) + \theta_{\varepsilon}(-l) g^{-}_{ab}(x)~,
\end{equation}
where $l = l(x)$ is a smooth function such that $l(x) = 0$ defines the hypersurface $\Sigma$. Since $\theta_{\varepsilon}(-l) = 1 - \theta_{\varepsilon}(l)$, this regularized metric can be interpreted as a smooth interpolation between $g^{+}_{ab}$ and $g^{-}_{ab}$ along the normal to $\Sigma$.

This metric is then defined in such a way that
\begin{eqnarray}
\langle g_{\varepsilon ab}, \varphi \rangle &=& \int_{M} g_{\varepsilon ab}(x) \varphi(x) dx \nonumber \\
&=& \int_{M} (\theta_{\varepsilon}(l) g^{+}_{ab}(x) + \theta_{\varepsilon}(-l) g^{-}_{ab}(x)) \varphi(x) \nonumber \\
&=& \int_{M} \theta_{\varepsilon}(l) g^{+}_{ab}(x)dx + \int_{M}\theta_{\varepsilon}(-l) g^{-}_{ab}(x)) \varphi(x)dx\nonumber \\
&=& \langle \theta_{\varepsilon}(l) g^{+}_{ab}(x), \varphi \rangle +  \langle \theta_{\varepsilon}(-l) g^{-}_{ab}(x), \varphi \rangle~,
\end{eqnarray}
in the limit $\varepsilon \to 0$, $\theta_\varepsilon \to \theta$, and the regularized metric tends, in a generalized sense, to the discontinuous metric
\begin{equation}
\lim_{\varepsilon \to 0} \langle g_{\varepsilon ab}, \varphi \rangle \equiv \langle g_{ab}, \varphi \rangle \equiv \langle \theta(l) g^{+}_{ab}, \varphi \rangle +  \langle \theta(-l) g^{-}_{ab}, \varphi \rangle~
\end{equation}
and so
\begin{equation}
g_{\varepsilon ab}(x) \approx \theta(l) g^+{ab}(x) + \theta(-l) g^-{ab}(x)~,
\end{equation}
where the symbol $\approx$ denotes equality in the weak sense in Colombeau algebras (see Section \ref{sec2}).

This definition of the ambient metric as an element of a Colombeau algebra will allow us, in the following sections, to rigorously calculate geometric quantities even in the presence of explicit discontinuities in the metric. Furthermore, this formulation avoids ambiguities associated with products of distributions already mentioned, which naturally arise when deriving the metric in the neighborhood of $\Sigma$.

\subsection{Geometric implications of metric discontinuity}

Let's evaluate the impact of considering \( [g_{ab}] \neq 0 \). Furthermore, we will present how we will treat the notion of normal and tangent vectors in this context.

Allowing \( [g_{ab}] \neq 0 \) means that
\begin{enumerate}
	\item The metric \( g_{ab} \) is discontinuous on the hypersurface \( \Sigma \), that is, \( g^+_{ab} \neq g^-_{ab} \).
	
	\item  The induced metric \( h_{\alpha\beta} = g_{ab} e^a_\alpha e^b_\beta \) will also be discontinuous
	\begin{equation}
	[h_{\alpha\beta}] = [g_{ab}] e^a_\alpha e^b_\beta \neq 0~,
	\end{equation}
but this raises questions about the behavior of \( e^a_\alpha\) and will be discussed later.
	
	\item The inverse metric $g^{ab}$ is also discontinuous and can be expressed distributively as
	\begin{equation}
	g^{ab}(x) = \theta(l) g^{ab+}(x) + \theta(-l) g^{ab-}(x).
	\end{equation}
	Its regularized form is
	\begin{equation}\label{metric-invers-gen}
	g^{ab}_\varepsilon = \theta_\varepsilon(l) g^{ab+} + \theta_\varepsilon(-l) g^{ab-}~.
	\end{equation}
However, it is important to highlight that, due to the non-linearity of the matrix inversion operation, the function $g^{ab}_\varepsilon$ is not, in general, the exact inverse of $g_{\varepsilon ab}$, that is, $(g_{\varepsilon ab})^{-1} \neq g^{ab}_\varepsilon.$
	
Even so, in Colombeau's formalism, the identity $g^{ab}_\varepsilon g_{\varepsilon bc} = \delta^a_c$ is recovered in the weak sense, through association.
	
To verify this property, we observe that
	\begin{eqnarray}\label{prop.metric}
	g^{ab}_\varepsilon g_{\varepsilon bc} &=& \left(\theta_\varepsilon(l) g^{ab+} + \theta_\varepsilon(-l) g^{ab-}\right) \left( \theta_\varepsilon(l) g^{+}_{bc} + \theta_\varepsilon(-l) g^{-}_{bc}\right) \nonumber\\
	&=& \theta^2_\varepsilon(l) g^{ab+}g^{+}_{bc} + \theta_\varepsilon(l)\theta_\varepsilon(-l)g^{ab+}g^{-}_{bc} + \theta_\varepsilon(-l)\theta_\varepsilon(l)g^{ab-}g^{+}_{bc} +  \theta^2_\varepsilon(-l) g^{ab-}g^{-}_{bc} \nonumber\\
	&\approx& \theta_\varepsilon(l)g^{ab+}g^{+}_{bc} +  \theta_\varepsilon(-l) g^{ab-}g^{-}_{bc} \nonumber\\
	&\approx& \theta_\varepsilon(l)\delta^{a}_{c} +  \theta_\varepsilon(-l) \delta^{a}_{c} \nonumber\\
	&\approx& \delta^{a}_{c}~.
	\end{eqnarray}
Here we use the properties \[	\theta_\varepsilon^2(l) \approx \theta_\varepsilon(l), \quad \theta_\varepsilon(l)\theta_\varepsilon(-l) \approx 0.\]

The Eq. (\ref{prop.metric}) shows that, in the limit $\varepsilon \rightarrow 0$, the regularized metric satisfies
	\begin{itemize}
		\item For $l > 0$: $g_{\varepsilon bc} \rightarrow g^+_{bc}, \quad g^{ab}_\varepsilon \rightarrow g^{ab+}$ and $g^{ab+} g^+_{bc} = \delta^a_c$;
		\item For $l < 0$: $g_{\varepsilon bc} \rightarrow g^-_{bc}, \quad g^{ab}_\varepsilon \rightarrow g^{ab-}$ and $g^{ab-} g^-_{bc} = \delta^a_c$.
	\end{itemize}
	
Therefore, the identity relation $g^{ab}_\varepsilon g_{\varepsilon bc} \approx \delta^a_c$ remains valid in the weak sense, that is, in the context of Colombeau generalized functions.
\end{enumerate}

In the traditional approach to Darmois-Israel junction conditions, the normal vector is defined as
\[n_i = \epsilon \partial_i l, \quad n^i n_i = \epsilon = \mp 1,\] and is continuous across the hypersurface (\( [n^i] = 0 \)), because \( l \) is a continuous coordinate, like the proper distance along geodesics orthogonal to \( \Sigma \), for example.

By assuming \( [g_{ab}] \neq 0 \) the continuity of \( n_i \) can still be maintained, since \( n_i = \epsilon \partial_i l \) depends only on \( l \), and not directly on the metric. Thus, the geometric definition of \( n_i \) as the gradient of \( l \) is not affected by the discontinuity of the metric.

Although the covariant normal vector \(n_i = \epsilon\, \partial_i l\) remains continuous even when \([g_{ab}] \neq 0\), its contravariant form \(n^i = g^{ij} n_j\) is, in general, discontinuous, since it inherits the jump of the inverse metric. This is a natural geometric consequence of the setting rather than an ambiguity: it simply reflects the fact that any object depending on \(g^{ab}\) must exhibit the same degree of regularity (or irregularity) as the metric itself.

However, the normalization \( n^i n_i = \epsilon \) involves the discontinuous inverse metric \( g^{ij} \). This means that
\begin{equation}
n^i n_i = g^{ij} n_i n_j = \theta(l) g^{ij+} n_i n_j + \theta(-l) g^{ij-} n_i n_j~.
\end{equation}
So, just like in the standard case, we need that
\begin{equation}\label{cond.inv1}
g^{ij+} n_i n_j = g^{ij-} n_i n_j = \epsilon~.
\end{equation}
Since \( n_i \) is the same on both sides, that is, continuous, this imposes a restriction on the discontinuity of the inverse metric \( [g^{ij}] \). From Eq. (\ref{cond.inv1}), we have
\begin{eqnarray}
(g^{ij+} - g^{ij-}) n_i n_j &=& g^{ij+} n_i n_j - g^{ij-} n_i n_j \nonumber \\
&=& \epsilon - \epsilon = 0~.
\end{eqnarray}
Therefore,
\begin{equation}\label{cond.inv2}
[g^{ij}] n_i n_j = 0~.
\end{equation}
This condition is the minimum requirement for the normalization $n^i n_i = \epsilon$ to remain valid, even with $g^{ij}$ discontinuous.

It is important to emphasize that the condition established in Eq.~(\ref{cond.inv2}) is a restriction on the discontinuity \( [g^{ij}] \), but does not imply that \( [g^{ij}] \) is zero in all directions. It expresses that the projection of \( [g^{ij}] \) in the direction of \( n_i \) must be zero. In geometric terms, \( [g^{ij}] n_i n_j \) is a scalar that results from the contraction of \( [g^{ij}] \) with \( n_i n_j \).

For \([g^{ij}] n_i n_j = 0\), \([g^{ij}]\) must be orthogonal to \(n_i n_j\) in the tensor sense. This does not mean that \([g^{ij}] = 0\), but rather that \([g^{ij}]\) cannot have components in the direction of \(n_i n_j\). It is merely a restriction on the discontinuity of the inverse metric, ensuring that the normalization of \(n_i\) is consistent on both sides of \( \Sigma \), it only affects a specific combination of the components of \([g^{ij}]\).

In the traditional approach, the tangential vectors \( e^a_\alpha = \partial x^a / \partial y^\alpha \) are defined as the derivatives of the spacetime coordinates \( x^a \) with respect to the intrinsic coordinates of the hypersurface \( y^\alpha \). Furthermore, it is assumed that \( [e^a_\alpha] = 0 \), that is, the same coordinate system is adopted on both sides of \( \Sigma \).

When we consider \( [g_{ab}] \neq 0 \), then \( [h_{\alpha\beta}] = [g_{ab}] e^a_\alpha e^b_\beta \neq 0 \), which means that the induced metric is discontinuous. This raises the question: should the vectors \( e^a_\alpha \) be continuous or not? Geometrically, \( e^a_\alpha \) defines the tangent basis to the hypersurface. If we maintain the same choice of coordinates \( y^\alpha \) on both sides, then \( [e^a_\alpha] = 0 \), and the discontinuity in \( h_{\alpha\beta} \) comes entirely from \( [g_{ab}] \). However, if we allow the coordinates \( y^\alpha \) to be different on each side (that is, we use different coordinate systems for \( \Sigma \) in each region), then \( e^a_\alpha \) can be discontinuous: \( e^{a+}_\alpha \neq e^{a-}_\alpha \). This would bring complexity to the analysis, as we would need to define a relationship between the two coordinate systems.

Therefore, we maintain \( [e^a_\alpha] = 0 \), that is, we use the same coordinate system \( y^\alpha \) on both sides of \( \Sigma \). This maintains the consistency of the standard case and avoids additional complications in the calculations.

\section{Deduction of junction conditions for discontinuous metrics}\label{sec4}

In this section, we apply the regularized discontinuous metric to derive the generalized junction conditions in Einstein's equations. To do this, we analyze how geometric entities such as Christoffel symbols, Riemann and Ricci tensors, and Ricci scalars behave in the presence of genuine discontinuities. The goal is to identify the singular contributions to the energy-momentum tensor associated with these discontinuities.

Analogously to the standard case, when we differentiate Eq.~(\ref{metricaregularizada}) we obtain a proportional contribution to the function $\delta_{\varepsilon}(l)$.
\begin{equation}\label{deriv-metric2}
\partial_{c}g_{\varepsilon ab} = \theta_{\varepsilon}(l)\partial_{c}g^{+}_{ab}+\theta_{\varepsilon}(-l)\partial_{c}g^{-}_{ab} + \epsilon\delta_{\varepsilon}(l)[g_{ab}]n_{c}~.
\end{equation}

In the context of Colombeau's generalized function algebras, there is no ambiguity in the products of distributions that have arisen from this, so we will not require the discontinuity in the metric in $\Sigma$ to disappear, in order to have the most general case. Thus we have $[g_{ab}] \neq 0.$ Since $[g_{ab}] = g^{+}_{ab} - g^{-}_{ab}$, in this case we have $g^{+}_{ab} \neq g^{-}_{ab}$. Physically, this means that the metric is genuinely discontinuous\footnote{The term ``genuine discontinuity'' is used to distinguish the actual discontinuity of the metric represented by an effective jump between $g^{+}_{ab}$ and $g^{-}_{ab}$ on the $\Sigma$ hypersurface from mere apparent discontinuities, which may arise from a poor choice of coordinates or ill-defined transformations. In the context of the standard Darmois-Israel formalism, many apparent discontinuities can be removed by requiring continuity of $g_{ab}$ and good adaptability of the coordinate system. Here, by explicitly relaxing $[g_{ab}] \neq 0$, we are dealing with an intrinsic geometric discontinuity, that is, a physical break and not just a coordinate break.}
and it does not require that the induced metric $h_{\alpha\beta}$ be the same on both sides of $\Sigma$.

From Eq.(\ref{deriv-metric2}) the generalized Christoffel symbols can be determined as
\begin{equation}\label{christoffel-generalized}
\Gamma^{a}_{\varepsilon bc} \approx \theta_{\varepsilon}(l)\Gamma^{a+}_{bc} + \theta_{\varepsilon}(-l)\Gamma^{a-}_{bc} + B^{a}_{bc}\delta_{\varepsilon}(l)~,
\end{equation}
where
\begin{equation}
\Gamma^{a+}_{bc} = \dfrac{1}{2}g^{ad+}(\partial_{c}g^{+}_{bd} + \partial_{b}g^{+}_{cd} - \partial_{d}g^{+}_{bc})~,
\end{equation}
\begin{equation}
\Gamma^{a-}_{bc} = \dfrac{1}{2}g^{ad-}(\partial_{c}g^{-}_{bd} + \partial_{b}g^{-}_{cd} - \partial_{d}g^{-}_{bc})~,
\end{equation}
and
\begin{equation}
B^{a}_{bc} = \dfrac{\epsilon}{2}\tilde{g}^{ad}\left([g_{bd}]n_{c} + [g_{cd}]n_b -[ g_{bc}]n_d\right)~,
\end{equation}
where $\tilde{g}^{ad}$ is the \emph{average inverse metric}, defined by
\begin{equation}
\tilde{g}^{ad} := \dfrac{1}{2}(g^{ad+} + g^{ad-})~.
\end{equation}
A natural requirement of the thin-shell formalism is that the condition $B^{a}_{bc} \approx \displaystyle\lim_{\varepsilon\rightarrow 0}B^{a}_{\varepsilon bc} \approx 0$ be satisfied \cite{Huber2025}. In this work, however, we will not impose such a condition, since $[g_{ab}] \neq 0.$

The Riemann tensor can be calculated by considering Eq. (\ref{christoffel-generalized}), which results in
\begin{eqnarray}\label{riemann-generalized}
R^{a}_{\varepsilon bcd} \approx \theta_{\varepsilon}(l)R^{a+}_{bcd} + \theta_{\varepsilon}(-l)R^{a+}_{bcd} + (A^{a}_{bcd} + \tilde{A}^{a}_{bcd})\delta_{\varepsilon}(l) + Q^{a}_{bcd}\delta_{\varepsilon}'(l) + F^{a}_{bcd}\delta_{\varepsilon}^2(l),
\end{eqnarray}
where
\begin{equation}\label{Riemann-A1}
A^{a}_{bcd} = \epsilon\left([\Gamma^{a}_{bd}]n_c - [\Gamma^{a}_{bc}]n_d\right),
\end{equation}
\begin{eqnarray}\label{Riemann-A2}
\tilde{A}^{a}_{bcd} = \partial_{c}B^{a}_{bd} - \partial_{d}B^{a}_{cb} + \tilde{\Gamma}^{a}_{ce}B^{e}_{bd} + \tilde{\Gamma}^{e}_{db}B^{a}_{ce} - \tilde{\Gamma}^{a}_{de}B^{e}_{cb} - \tilde{\Gamma}^{e}_{cb}B^{a}_{de}~,
\end{eqnarray}
\begin{equation}\label{Riemann-Q}
Q^{a}_{bcd} =  \epsilon\left(B^{a}_{db}n_c - B^{a}_{cb}n_d\right)~,
\end{equation}
and
\begin{equation}\label{Riemann-F}
F^{a}_{bcd} = B^{a}_{ce}B^{e}_{db} - B^{a}_{de}B^{e}_{cb}~.
\end{equation}
In Eq.~(\ref{Riemann-A2}), $\tilde{\Gamma}^{a}_{ce}$ is the \textit{average Christoffel symbol}, given by
\begin{equation}\label{Christoffel-media}
\tilde{\Gamma}^{a}_{ce} := \dfrac{1}{2}\left(\Gamma^{a+}_{ce} + \Gamma^{a-}_{ce}\right)~.
\end{equation}

Note that in Eq.~(\ref{riemann-generalized}) \(R^{a+}_{bcd}\) and \(R^{a-}_{bcd}\) are the regular Riemann tensors on the \(l>0\) and \(l<0\) sides of \(\Sigma\), respectively, whereas the terms \(A^{a}_{bcd} + \tilde{A}^{a}_{bcd}\), \(Q^{a}_{bcd}\), and \(F^{a}_{bcd}\) are the singular contributions that capture the discontinuity. The term \(A^{a}_{bcd}\) is already present in the standard case, but all the others arise due to the metric discontinuity \( [g_{ab}] \neq 0 \).

The Ricci tensor is determined by the contraction $R_{bd} = R^{a}_{bcd},$ which allows us to write
\begin{eqnarray}\label{ricci-generalized}
R_{\varepsilon bd} \approx \theta_{\varepsilon}(l)R^{+}_{bd} + \theta_{\varepsilon}(-l)R^{-}_{bd} + (A_{bd} + \tilde{A}_{bd})\delta_{\varepsilon}(l) + Q_{bd}\delta_{\varepsilon}'(l) + F_{bd}\delta_{\varepsilon}^2(l)~.
\end{eqnarray}
The Ricci scalar is obtained by the contraction $R_{\varepsilon} = g^{bd}_{\varepsilon}R_{bd}.$ Here we need to be careful, since
\begin{eqnarray}
R_{\varepsilon} &\approx& g^{bd}_{\varepsilon}R_{bd} \nonumber \\ 
&\approx& (\theta_\varepsilon(l) g^{bd+} + \theta_\varepsilon(-l) g^{bd-}) \left( \theta_{\varepsilon}(l)R^{+}_{bd} + \theta_{\varepsilon}(-l)R^{-}_{bd} + \bar{A}_{bd}\delta_{\varepsilon}(l) + Q_{bd}\delta_{\varepsilon}'(l) + F_{bd}\delta_{\varepsilon}^2(l)\right)~, \nonumber
\end{eqnarray}
where $\bar{A}_{bd} = A_{bd} + \tilde{A}_{bd}$. Carrying out the products inside the parentheses and taking into account the multiplications of distributions, we obtain
\begin{equation}\label{escalar-ricci-generalized}
R_{\varepsilon} \approx   \theta_{\varepsilon}(l)R^{+} + \theta_{\varepsilon}(-l)R^{-} + \tilde{g}^{bd}\bar{A}_{bd}\delta_{\varepsilon}(l) + \tilde{g}^{bd}Q_{bd}\delta_{\varepsilon}'(l) + \left( \tilde{g}^{bd}F_{bd} - 2\tilde{g}^{bd}Q_{bd}\right)\delta_{\varepsilon}^2(l)~, 
\end{equation}
which yields the final regularized form of the Ricci scalar in the context of generalized functions. Note that $\tilde{g}^{bd}$ is the averaged inverse metric, which prevents the usual index–contraction manipulations, so that
\begin{equation}
\tilde{g}^{bd}\bar{A}_{bd} \neq \bar{A}, \tilde{g}^{bd}Q_{bd}\neq Q ~~\text{e}~~ \tilde{g}^{bd}F_{bd}\neq F~.
\end{equation}
By substituting the results (\ref{metricaregularizada}), (\ref{ricci-generalized}), and (\ref{escalar-ricci-generalized}), the Einstein field equations can be formulated in a generalized sense. Consequently, the Einstein tensor also decomposes into five parts, namely,
\begin{equation}
R_{\varepsilon ab} - \dfrac{1}{2}g_{\varepsilon ab}R_{\varepsilon} \approx 8\pi T_{\varepsilon ab}
\end{equation}
becomes
\begin{eqnarray}\label{einstein-generalized}
\theta_{\varepsilon}(l)\left(R^{+}_{\varepsilon ab} - \dfrac{1}{2}g^{+}_{\varepsilon ab}R^{+}_{\varepsilon}\right) + \theta_{\varepsilon}(-l)\left(R^{-}_{\varepsilon ab} - \dfrac{1}{2}g^{-}_{\varepsilon ab}R^{-}_{\varepsilon}\right) + \left(\bar{A}_{ab} - \dfrac{1}{2}\tilde{g}_{ab}\tilde{g}^{ab}\bar{A}_{ab}\right)\delta_{\varepsilon}(l) \nonumber \\
+ \left(Q_{ab} - \dfrac{1}{2}\tilde{g}_{ab}\tilde{g}^{ab}Q_{ab}\right)\delta'_{\varepsilon}(l) + \left(F_{ab} - \dfrac{1}{2}\tilde{g}_{ab}\tilde{g}^{ab}(F_{ab} - 4Q_{ab})\right)\delta^2_{\varepsilon}(l) \approx 8\pi T_{\varepsilon ab}~.
\end{eqnarray}
Observing the structure of (\ref{einstein-generalized}), we can express the energy–momentum tensor in the following form
\begin{equation}\label{tensor-EM-completo}
T_{\varepsilon ab} \approx \theta_{\varepsilon}(l)T^{+}_{ab} + \theta_{\varepsilon}(-l)T^{-}_{ab} + \delta_{\varepsilon}(l)S_{ab} + \delta'_{\varepsilon}(l)W_{ab} + \delta^2_{\varepsilon}(l)J_{ab}
\end{equation}
where
\begin{equation}
\lim_{\varepsilon \to 0}T_{\varepsilon ab} \approx T_{ab} \approx  \theta(l)T^{+}_{ab} + \theta(-l)T^{-}_{ab} + \delta(l) S_{ab} + \delta'_{\varepsilon}(l)W_{ab} + \delta^2_{\varepsilon}(l)J_{ab},
\end{equation}
with
\begin{equation}\label{tensor-S-generalized}
S_{ab} = \bar{A}_{ab} - \dfrac{1}{2}\tilde{g}_{ab}\tilde{g}^{ab}\bar{A}_{ab}~,
\end{equation}
\begin{equation}\label{tensor-W-generalized}
W_{ab} = Q_{ab} - \dfrac{1}{2}\tilde{g}_{ab}\tilde{g}^{ab}Q_{ab}~,
\end{equation}
and
\begin{equation}\label{tensor-J-generalized}
J_{ab} = F_{ab} - \dfrac{1}{2}\tilde{g}_{ab}\tilde{g}^{ab}(F_{ab} - 4Q_{ab})~.
\end{equation}
Equation~(\ref{tensor-EM-completo}) describes how the energy--momentum tensor \(T_{ab}\) behaves as one crosses the hypersurface \(\Sigma\). On each side of \(\Sigma\), the tensor \(T_{ab}\) is given by \(T^{+}_{ab}\) (for \(l > 0\)) and \(T^{-}_{ab}\) (for \(l < 0\)), which determine the local geometry in the usual way through Einstein’s equations. However, imposing a discontinuous metric in the Einstein tensor across \(\Sigma\) implies that Einstein’s equations cannot be satisfied solely by \(T^{+}_{ab}\) and \(T^{-}_{ab}\), nor simply by
\begin{equation}\label{tensorEMcompleto}
T_{\alpha\beta} = \theta(l)T^{+}_{\alpha\beta} + \theta(-l)T^{-}_{\alpha\beta} + \delta(l)S_{\alpha\beta}~.
\end{equation}
as in the standard case.

Therefore, we observe the appearance of additional terms, \(W_{ab}\) and \(J_{ab}\), together with \(S_{ab}\), which is already present in the standard Darmois–Israel conditions. Equation~(\ref{tensor-S-generalized}) is the analogue of
\begin{equation}\label{tensor-S-padrão}
S_{\alpha\beta} = \dfrac{1}{8\pi}\left(A_{\alpha\beta} - \dfrac{1}{2}Ag_{\alpha\beta}\right)
\end{equation}
and represents a singular contribution (proportional to \(\delta(l)\)) supported on the hypersurface \(\Sigma\). This term acts as an additional energy–momentum distribution on \(\Sigma\), ensuring that Einstein’s equations remain consistent throughout the spacetime, including across the hypersurface.

The challenge lies in providing a physical interpretation for Eqs.~(\ref{tensor-W-generalized}) and (\ref{tensor-J-generalized}). These terms do not make sense within the classical theory of distributions (since one cannot multiply arbitrary distributions), but within Colombeau algebras they are well defined. The Einstein equations obtained here reveal new generalized terms associated with the geometric discontinuity of the metric. In what follows, we discuss how these terms manifest in physical contexts and their possible interpretations.

\subsection{Possible Physical Interpretations of Singular Terms on \(\Sigma\)}

The classical formulation of junction conditions in general relativity, as introduced by Darmois and refined by Israel, relies on the assumption that the metric tensor is continuous across the hypersurface \(\Sigma\), that is, \([g_{ab}] = 0\). This assumption eliminates ambiguous products of distributions such as \(\theta(l)\delta(l)\) or \(\delta^{2}(l)\), ensuring that Einstein’s equations remain valid in the sense of classical distribution theory.

However, once the continuity condition on the metric is relaxed and we allow \([g_{ab}] \neq 0\), new singular terms emerge in the development of the field equations. These terms include not only the familiar \(\delta(l)\) contribution, which characterizes the presence of a thin shell of energy–momentum on \(\Sigma\), but also more singular contributions such as \(\delta'(l)\) and \(\delta^{2}(l)\).

Although such terms are not defined within the framework of Schwartz distributions, they become well defined in the context of Colombeau algebras of generalized functions. As presented in Section~\ref{sec2}, this formalism makes it possible to manipulate products and derivatives of distributions in a manner compatible with differential calculus, thereby providing a mathematically consistent treatment of the objects involved.

Within this framework, we now discuss possible physical interpretations that are compatible with the geometric nature of \(\Sigma\):

\begin{itemize}
	\item[(i)] The term proportional to \( \delta_{\varepsilon}(l) \) remains associated, as in the standard case, with an energy–momentum density concentrated infinitesimally on the hypersurface \( \Sigma \). This distributional contribution is widely recognized in the literature as representing a thin shell that separates two spacetime regions, with the localized energy and momentum on \( \Sigma \) determined by the tensor \( S_{ab} \).
	
	However, in the context of a discontinuous metric \( [g_{ab}] \neq 0 \), the structure of this term is fundamentally modified. The surface tensor \( S_{ab} \), obtained from the generalized Einstein equations, depends not only on the jump in the extrinsic curvature \( A_{ab} \), but also on an additional term \( \tilde{A}_{ab} \), which arises directly from the metric discontinuity.
	
	This modification shows that, although \( \delta_{\varepsilon}(l) \) still models an infinitesimally concentrated distribution, the physical content associated with this term is not the same as in the case of a continuous metric, since the surface energy density now also depends on the way in which the geometry breaks across \( \Sigma \). Its tensorial structure therefore reflects the direct influence of the metric discontinuity. Thus, in this new context, the term \( \delta_{\varepsilon}(l) \) captures not only the presence of a thin shell but also information about the internal geometry of the discontinuity.

	\item[(ii)] The term proportional to \( \delta'_{\varepsilon}(l) \) may be associated with abrupt variations in the geometry or in the energy distribution along the normal direction to the hypersurface \( \Sigma \). In the classical distributional framework, the derivative of the Dirac delta is only an operational object, without a pointwise definition and without the possibility of being multiplied with other distributions. In the context of Colombeau algebras, however, \( \delta'_{\varepsilon}(l) \) is a well-defined smoothened and concentrated function, which allows it to be interpreted as a marker of non-smooth transitions. Thus, this term may be linked to internal stresses, discontinuities in the derivative of the curvature, or to a variation in the energy–momentum distribution across \( \Sigma \), reflecting a more intricate internal structure of the layer (see Appendix~\ref{apend}).

	\item[(iii)] The term proportional to \( \delta_{\varepsilon}^2(l) \) represents an even higher level of singularity; unlike \( \delta_{\varepsilon}(l) \), it has no meaning within the classical Schwartz distribution framework. However, its properties, which parallel those of the Dirac delta, acquire a precise meaning within Colombeau algebras, where the quantity \( \int \delta^2(x)\, dx \) is interpreted as a generalized number of infinitely strong yet moderate nature. Typical regularizations of \( \delta_{\varepsilon}^2(l) \) reveal a highly localized structure, with a divergent maximum value and an integral that grows like \( \sim \varepsilon^{-1} \) in the regularization limit \( \varepsilon \to 0 \). This behavior suggests that \( \delta^2(l) \) carries a singular intensity more pronounced than that of a simple surface energy distribution (see Section~\ref{sec2}).

	In light of these properties, it is reasonable to interpret this term as the idealized manifestation of an ultra–concentrated energy or curvature density along the hypersurface \( \Sigma \). Such a structure could correspond, for instance, to a limiting regime of a thick layer with internally singular curvature, or to an extreme geometric transition. Works such as that of Huber~\cite{Huber2025} show that, when employing smoothed functions in related formalisms, objects with support spread over subregions of spacetime naturally arise, characterizing thick shells. In this sense, the presence of \( \delta_{\varepsilon}^2(l) \) may be interpreted as a mathematical marker of an extreme geometric transition or of an effectively thick layer. Although its definitive physical interpretation still requires further investigation, its formal consistency within Colombeau algebras justifies its inclusion as a legitimate component of the extended field equations.

\end{itemize}

It is important to emphasize that these interpretations are, for the time being, hypothetical. They are not grounded in experimental models or in any established physical consensus, but rather in the formal consistency of the Colombeau framework and in the mathematical structure that naturally emerges when \([g_{ab}] \neq 0\). As such, these ideas should be regarded as preliminary proposals, whose deeper exploration may open new perspectives on the nature of generalized functions in general relativity.

From this viewpoint, the new terms \( W_{ab} \) and \( J_{ab} \) may be understood as indicators of additional geometric degrees of freedom that arise when the continuity of the metric is relaxed. Although their definitive physical interpretation still requires further investigation, their presence signals a conceptually relevant step toward a more comprehensive description of singular structures in spacetime.

Assuming a discontinuous metric leads to a more general and intricate framework.  
Equations (\ref{tensor-S-generalized})–(\ref{tensor-J-generalized}) are not, in a strict sense, the Lanczos equations, since they are formulated as four-tensors. A complete formulation requires performing the projections onto the hypersurface, from which the three-tensors \(S_{\alpha\beta}\), \(W_{\alpha\beta}\), and \(J_{\alpha\beta}\) emerge. These objects are analogous to the classical Lanczos equation,  
\[
S_{\alpha\beta} = \frac{\epsilon}{8\pi}\left([K_{\alpha\beta}] - [K]\,h_{\alpha\beta}\right),
\]
but incorporate additional contributions produced by the discontinuity \( [g_{ab}] \neq 0 \).

In addition to the tangential projection, the normal components \(G_{ab}n^{a}n^{b}\) and the mixed components \(G_{ab}e^{a}_{\alpha}n^{b}\) play an essential role in characterizing how the discontinuity affects the flux of energy-momentum across \(\Sigma\). These terms are particularly relevant for \(W_{ab}\) and \(J_{ab}\), which, being more singular, may exhibit significant normal components and influence both the conservation of energy-momentum and the intrinsic and extrinsic dynamics of the hypersurface.

The geometric and physical impact of these projections will be examined in subsequent work, where a systematic analysis will clarify how the metric discontinuity modifies both the intrinsic structure of \(\Sigma\) and the behavior along the normal direction.

\section{Summary and Conclusions}\label{sec5}

In this work, we developed an extension of the Darmois-Israel formalism to the case in which the metric exhibits genuine discontinuities, that is, when $[g_{ab}] \neq 0$. The formulation was constructed systematically within the framework of Colombeau algebras, allowing a rigorous treatment of products and derivatives of distributions that arise in the Einstein field equations.

The resulting formalism provides a coherent mathematical structure for describing non-smooth junctions, preserving geometric consistency even in the presence of discontinuities in the metric and its derivatives. The analysis showed that a direct discontinuity in the metric generates a sequence of geometric effects: (i) the behavior of the normal vector and the induced metric requires additional consistency conditions, since only the covariant normal $n_i$ remains continuous, while $n^i$ and $h_{\alpha\beta}$ inherit the discontinuity of the inverse metric; (ii) the Christoffel symbols acquire terms proportional to the delta distribution and its derivative; (iii) the curvature tensor exhibits a decomposition by singularity order, with regular, singular, and ultrasingular parts; and (iv) in the generalized energy–momentum tensor, besides the standard thin-shell term, new contributions arise associated with the geometric breaking across the junction hypersurface, terms that do not appear in the classical Darmois-Israel case.

The proposed model recovers, as a special case, the classical Darmois-Israel conditions, but extends them to a more general setting in which spacetime may display abrupt transitions or complex internal structures. In this way, the Colombeau framework proves to be not only mathematically consistent but also conceptually robust, offering new avenues for the investigation of singularities, signature change, and geometric boundaries in spacetime.

\appendix
\section{Geometric and distributional properties of \(\delta'(l)\)} \label{apend}

In this appendix we present additional details concerning the tensor \( W_{ab} \), which is proportional to \( \delta'_{\varepsilon}(l) \). This allows a more refined analysis of the possible interpretations of this term.

Let \( l(x^a) \) be a scalar field on spacetime, and let \( \delta_{\varepsilon}(l) \) denote a regularization of the Dirac delta supported on the hypersurface
\[
\Sigma = \{ x^a \mid l(x^a) = 0 \}.
\]
Applying the chain rule, we obtain
\begin{equation}
\partial_a \delta(l(x)) = \delta'(l)\, \partial_a l(x).
\end{equation}
Since the normal vector is defined (up to sign) by \( n_a = \epsilon\, \partial_a l \), it follows that
\begin{equation}\label{derivada-delta}
\partial_a \delta(l) = \delta'(l)\, n_a .
\end{equation}

Equation~\eqref{derivada-delta} shows that:
\begin{itemize}
	\item The spacetime derivative of the delta distribution is still supported on \( \Sigma \); in this sense it ``lives'' on the hypersurface just as \( \delta(l) \) does.
	
	\item It carries a definite orientation, namely the direction of the normal vector \( n_a \).
	
	\item Thus, the derivative of a distribution concentrated on a hypersurface measures how that distribution changes when one moves away from \( \Sigma \) along the normal direction.
\end{itemize}

Therefore, \( \delta'(l) \) measures variation along the normal direction to \( \Sigma \). This makes it reasonable to view the tensor \( W_{ab} \) as indicating that the distribution of energy–momentum is not homogeneous or constant along \( l \), but exhibits internal variations. This interpretation does not yet have a fully established status in the literature, but we propose it as a plausible and mathematically well–founded viewpoint.

\section*{References}

\end{document}